\documentclass[
reprint,
superscriptaddress,
nobibnotes,
amsmath,amssymb,
aps,
prb,
]{revtex4-2}

\usepackage{bm}
\usepackage{amsmath}
\usepackage{amssymb}
\usepackage{amsthm}
\usepackage{graphicx}
\usepackage{booktabs}
\usepackage[hypertexnames=false]{hyperref}

\newtheorem{theorem}{Theorem}

\begin{document}

\title{Thermodynamic incompleteness in non-Markovian Majorana transport}

\author{Yang Tian}
\email{tyanyang04@gmail.com \& yang.tian@infplane.com}
\affiliation{Infplane Computing Technologies Ltd, Beijing, 100080, China}

\begin{abstract}
We show that the complete knowledge of the non-Markovian island-state dynamics of
a floating Majorana island does not, in general, determine the thermodynamic
transport statistics measured in the leads. We demonstrate this statement in
a Coulomb-blockaded island with $M$ Majorana zero modes coupled to structured
reservoirs. In the cotunneling regime, a Schrieffer-Wolff transformation gives
reservoir-assisted transitions generated by Majorana bilinears. After the
reservoirs are traced out, the island state determines the memory kernel
associated with each bilinear, and this is enough to predict all island-state
observables within the cotunneling approximation. It is not enough to
determine which lead or detector channel supplied the electron, absorbed the
electron, or carried the corresponding energy exchange. This is a genuine
loss of thermodynamic information, not an error in the island equation. We
formulate the result as a thermodynamic completeness criterion: an island
memory equation determines a transport observable only when that observable is
constant over all assignments of reservoir channels that give the same island
memory kernel. The criterion gives a measurable prediction. Two
structured-reservoir Majorana devices can have identical island-state
tomography and relaxation, but different charge noise measured separately in
the leads, heat noise, and mixed charge-energy correlations. The geometry of the projection
from reservoir records to island kernels and the topology of the network of
tunnel contacts identify which transport information is absent from
island-state dynamics.
\end{abstract}

\maketitle

\section{Introduction}

Majorana islands in the Coulomb blockade regime provide a controlled setting
for nonlocal transport, electron teleportation, current statistics, and
topological Kondo physics
\cite{MichaeliLandauSelaFu2017,BeriCooper2012,LutchynGlazman2017,KleinherbersSchuenemannKoenig2023,SchurayRammlerRecher2020}.
Recent transport and full-counting-statistics studies have used charge,
noise, and counting singularities to probe Majorana devices and topological
phases
\cite{ZhangFull2021,WangDistinguishing2024,KleinherbersSchuenemannKoenig2023,SchurayRammlerRecher2020,McCullochDeNardisGopalakrishnanVasseur2023,TirritoFull2023,PaulinoDeviation2024,GerryFull2023,EngelhardtLuoBastidasPlatero2024a,EngelhardtLuoBastidasPlatero2024b}.
Recent work on non-Markovian quantum stochastic processes, structured
reservoirs, and non-Markovian thermodynamics has also clarified how memory
modifies open-system dynamics and entropy production
\cite{MilzModi2021,LuoLambertLiangCirio2023,CockrellMarkovian2022,SenyasaEntropy2022,GhoshalHeat2022,WisniewskiMemory2024,FiorelliGherardiniMarcantoni2023,ZhangWangZengWang2022,WuMancinoCarlessoCiampiniMagriniKieselPaternostro2022}.
These developments leave open a reconstruction problem that is directly
relevant for transport experiments. Charge transfer, heat transfer, and
current noise are records of how electrons enter and leave the reservoirs,
whereas the island density matrix retains only the memory kernel that acts on
the Majorana degrees of freedom.

The reconstruction problem is also connected to several recent directions in
nonequilibrium statistical physics. Quantum and classical thermodynamic
uncertainty relations, speed limits, and response bounds relate current
fluctuations to dissipation and dynamical activity
\cite{HasegawaContinuous2020,HasegawaThermodynamic2021,HasegawaThermodynamic2022,SalazarThermodynamic2024,LiuCoherences2021,MonnaiThermodynamic2022,ReicheThermodynamic2022,ZiyinUniversal2023,HasegawaIrreversibility2021,MonnaiArbitrary2023,AlmanzaMarreroCertifying2025,HamazakiSpeed2022,PoggiDiverging2021,LanGeometric2022}.
Thermodynamic inference, stochastic thermodynamics, and information-geometric
approaches ask which observables can be inferred from partial dynamical data
\cite{LandiPaternostro2021,FalascoEsposito2025,ChatzittofiGolestanianAgudoCanalejo2024,HarunariFioreBarato2024,BettmannInformation2025,KobayashiHessian2022,BrandnerSaito2020,NakazatoGeometrical2021,GiordanoEntropy2021,LeightonJensen2024,TasnimMultiple2023,ChirikjianRate2021,Zamparoquadratic2023,Budhirajaentropy2021,RobertsHidden2021,BrasilThermodynamic2023}.
Large-deviation and macroscopic-fluctuation approaches give a complementary
language in which state variables and currents are kept as separate dynamical
data
\cite{KraaijLazarescuMaesPeletier2020,PattersonRengerSharma2024,RengerSharma2023,DoyonBallistic2023,KunduBallistic2025,AgranovMacroscopic2023,MallickExact2024,KrajenbrinkCrossover2023,BouleyDynamical2024,PeletierGradient2023,MartinOByrneCatesFodorNardiniTailleurVanWijland2021}.
The present paper uses a Majorana island to turn this general distinction into
a concrete condensed-matter statement. The island memory kernel specifies
how the past island density matrix influences the later island density matrix
after the reservoirs have been traced out. In obtaining this kernel, the lead
and channel labels of cotunneling events are summed. A transport measurement
asks a different question: which reservoir channel gained or lost charge and
energy in each event. The same island memory kernel can therefore be
compatible with different counting-field extensions and different measured
transport statistics.

The question addressed here is motivated by our previous work
\cite{Tian2026ThermodynamicCompleteness}, where we show that a Markovian
state generator can be insufficient to determine thermodynamic transport
records. In that Markovian setting, a transport observable is reconstructible
from state dynamics only when it is invariant under all reservoir-channel
redistributions that leave the state generator unchanged. We now extend this
criterion to a non-Markovian Majorana device. The main finding is that
complete knowledge of
non-Markovian island-state dynamics does not, in general, determine the
thermodynamic transport statistics measured in the leads. More precisely, the
island-state memory
equation can determine all island-state observables while still leaving the
transport record measured in individual leads undetermined. In the Coulomb
blockade valley,
non-Markovian island dynamics
depends on sums over lead and channel memory kernels associated with Majorana
bilinears, whereas charge, heat, and noise statistics depend on the individual
lead and channel processes before those sums are taken. We first derive the
cotunneling Hamiltonian for an $M$-terminal floating island. We then derive
the non-Markovian memory kernels generated by structured reservoirs and
introduce counting fields before the lead and channel labels are summed out.
The central theorem shows that island-state dynamics fixes only the
kernels associated with each Majorana bilinear, not the individual lead and
channel processes that produce those kernels. We finally express this loss of
information as a projection in memory-kernel space and show how the
topology of the network of tunnel contacts and the fixed fermion-parity
constraints of Majorana zero modes determine which transport records are
absent from island state dynamics.

\section{Floating island and cotunneling Hamiltonian}
\label{Model-section}

We begin with a microscopic model that keeps the island operators and the
reservoir channels separate. This separation is essential because the
island-state dynamics will later sum over channel labels, while thermodynamic
measurements keep them as part of the transport record.

We consider a floating superconducting island with charging Hamiltonian
\begin{equation}
H_{C}
=E_{C}\left(\hat{N}-n_{g}\right)^{2}.
\label{II-EQ001}
\end{equation}
The island hosts $M$ spatially separated Majorana zero modes
$\gamma_{j}=\gamma_{j}^{\dagger}$, with
\begin{equation}
\left\{\gamma_{j},\gamma_{k}\right\}
=2\delta_{jk}.
\label{II-EQ002}
\end{equation}
Contact $j$ is coupled to a structured set of reservoir channels $r$. The
tunneling Hamiltonian is
\begin{equation}
H_{T}
=\sum_{j=1}^{M}\sum_{r}
\left[
t_{jr}\psi_{jr}^{\dagger}\gamma_{j}e^{-i\hat{\phi}/2}
+t_{jr}^{\ast}e^{i\hat{\phi}/2}\gamma_{j}\psi_{jr}
\right],
\label{II-EQ003}
\end{equation}
where $e^{\mp i\hat{\phi}/2}$ changes the island charge by one electron. The
reservoir channels may represent distinct leads, energy filters, spin filters,
or calorimetric detector channels. The channel index is therefore part of the
measured transport record, not a relabeling introduced after the measurement.

We work in a Coulomb blockade valley with fixed charge $N_{0}$. In this
regime, the island cannot change its charge in a real process, but virtual
charge states still mediate cotunneling. The corresponding virtual excitation
energies are
\begin{align}
E_{+}
&=E_{C}\left(N_{0}+1-n_{g}\right)^{2}
-E_{C}\left(N_{0}-n_{g}\right)^{2},
\label{II-EQ004}\\
E_{-}
&=E_{C}\left(N_{0}-1-n_{g}\right)^{2}
-E_{C}\left(N_{0}-n_{g}\right)^{2}.
\label{II-EQ005}
\end{align}
These two energy costs set the strength of the low-energy cotunneling
processes. To second order in $H_{T}$ and within the fixed-charge manifold
$N=N_{0}$, a standard Schrieffer-Wolff transformation
\cite{SchriefferWolff1966} gives the cotunneling Hamiltonian
\begin{equation}
H_{\rm cot}
=\sum_{j<k}\sum_{r,s}
\left[
J_{jr,ks}\psi_{jr}^{\dagger}\psi_{ks}S_{jk}
+J_{jr,ks}^{\ast}S_{jk}\psi_{ks}^{\dagger}\psi_{jr}
\right],
\label{II-EQ006}
\end{equation}
where the island operator carried by a cotunneling event is the Majorana
bilinear
\begin{equation}
S_{jk}=i\gamma_{j}\gamma_{k}
\label{II-EQ007}
\end{equation}
and the cotunneling amplitude is
\begin{equation}
J_{jr,ks}
=\frac{t_{jr}t_{ks}^{\ast}}{2}
\left(\frac{1}{E_{+}}+\frac{1}{E_{-}}\right).
\label{II-EQ008}
\end{equation}
We obtain Eq. (\ref{II-EQ006}) by projecting the tunneling Hamiltonian onto
the fixed-charge manifold and keeping the two virtual paths in which the
island charge is first raised or first lowered. The Schrieffer-Wolff
expression
$H_{\rm eff}=-PH_{T}Q\left(QH_{C}Q-E_{0}\right)^{-1}QH_{T}P$ gives the two
energy denominators $E_{+}^{-1}$ and $E_{-}^{-1}$. The Majorana product
generated by a process from contact $k$ to contact $j$ is
$\gamma_{j}\gamma_{k}=-iS_{jk}$, which gives Eqs.
(\ref{II-EQ006}-\ref{II-EQ008}) after collecting the Hermitian-conjugate
process.

Potential-scattering terms with $j=k$ do not change the island state and are
not needed for the island-state equation considered below. If such terms are
included in transport counting, they add further reservoir records that are
not visible in the island state and therefore strengthen, rather than weaken,
the incompleteness result. The cotunneling Hamiltonian separates the two
pieces of information carried by one event: the island sees the bilinear
$S_{jk}$, whereas the reservoirs record the channel pair
$\left(jr,ks\right)$. The operators $S_{jk}$ satisfy the
$\mathfrak{so}\left(M\right)$ commutation relations
\begin{align}
\left[S_{jk},S_{\ell m}\right]
=2i(&\delta_{k\ell}S_{jm}-\delta_{j\ell}S_{km}
\nonumber\\
&-\delta_{km}S_{j\ell}+\delta_{jm}S_{k\ell}).
\label{II-EQ009}
\end{align}
which encode the nonlocal Majorana algebra that underlies the island-state
dynamics. Having identified the event operator seen by the island and the
channel pair recorded by the reservoirs, we now trace out the reservoirs and
derive the memory kernels that enter the non-Markovian island equation.

\section{Non-Markovian memory kernels}
\label{Memory-section}

The cotunneling Hamiltonian contains more information than the island
state can retain. We now show explicitly which part survives after the
structured reservoirs are integrated out.

Let
\begin{equation}
B_{jr,ks}=\psi_{jr}^{\dagger}\psi_{ks}
\label{III-EQ001}
\end{equation}
be the reservoir operator that transfers an electron from channel $ks$ to
channel $jr$. The structured reservoir correlation functions are
\begin{align}
C_{jr,ks}^{>}\left(t\right)
&=\left\langle B_{jr,ks}\left(t\right)
B_{jr,ks}^{\dagger}\left(0\right)\right\rangle,
\label{III-EQ002}\\
C_{jr,ks}^{<}\left(t\right)
&=\left\langle B_{jr,ks}^{\dagger}\left(0\right)
B_{jr,ks}\left(t\right)\right\rangle.
\label{III-EQ003}
\end{align}
For finite-bandwidth reservoirs these functions are not proportional to
delta functions in time. The cotunneling kernels with explicit channel labels
are
\begin{align}
K_{jr,ks}^{>}\left(t\right)
&=\left|J_{jr,ks}\right|^{2}C_{jr,ks}^{>}\left(t\right),
\label{III-EQ004}\\
K_{jr,ks}^{<}\left(t\right)
&=\left|J_{jr,ks}\right|^{2}C_{jr,ks}^{<}\left(t\right).
\label{III-EQ005}
\end{align}
The island-state dynamics does not retain the individual channel pair. It
retains only the kernels labelled by the Majorana bilinear,
\begin{align}
K_{jk}^{>}\left(t\right)
&=\sum_{r,s}K_{jr,ks}^{>}\left(t\right),
\label{III-EQ006}\\
K_{jk}^{<}\left(t\right)
&=\sum_{r,s}K_{jr,ks}^{<}\left(t\right).
\label{III-EQ007}
\end{align}
The second-order island equation follows by inserting these kernels into the
Born memory expansion. In the interaction picture of the isolated island
ground-state manifold, the part that controls the dissipative island-state
dynamics is
\begin{align}
\frac{d\rho\left(t\right)}{dt}
=\sum_{j<k}\int_{0}^{t}d\tau\,
\Gamma_{jk}\left(\tau\right)
\left[
S_{jk}\rho\left(t-\tau\right)S_{jk}
-\rho\left(t-\tau\right)
\right],
\label{III-EQ008}
\end{align}
where
\begin{equation}
\Gamma_{jk}\left(t\right)
=K_{jk}^{>}\left(t\right)+K_{jk}^{<}\left(t\right).
\label{III-EQ009}
\end{equation}
Equation (\ref{III-EQ008}) assumes weak cotunneling, Gaussian reservoirs,
diagonal channel-pair correlations, a fixed island charge, and a fixed total
fermion parity in a degenerate Majorana manifold. Cross-correlated reservoir
channels add off-diagonal memory kernels with the same projection structure.
Principal-value terms can add coherent memory corrections. They are also
constructed from the sums over channel pairs at fixed Majorana bilinear in Eqs.
(\ref{III-EQ006}-\ref{III-EQ007}). In the exactly degenerate Majorana
manifold, the simplest bilinear contribution reduces to a constant and does
not affect the island state. The completeness question is therefore already fixed by the
dissipative kernels in Eq. (\ref{III-EQ008}): the island-state dynamics uses
only the sums over the measured lead and channel labels. The next step is to
insert counting fields before this summation, because transport experiments
measure the channel record that the island-state equation has discarded.

\section{Counting kernels and completeness theorem}
\label{Completeness-section}

The previous section shows where information is lost in the island-state
equation. We now keep the lead and channel labels long enough to define the
transport record. We attach a charge counting field $\chi_{jr}$ and an energy
counting field $\xi_{jr}$ to each measured reservoir channel. For compactness,
we write a discrete channel energy $\varepsilon_{jr}$. For a continuous
structured reservoir, the sums over $r$ are replaced by integrals and the
counting factors are inserted under the energy integral. The pair
$\lambda_{jr}=\left(\chi_{jr},\xi_{jr}\right)$ tilts the cotunneling kernels as
\begin{align}
K_{jr,ks}^{>,\lambda}\left(t\right)
&=K_{jr,ks}^{>}\left(t\right)
e^{i\left(\chi_{jr}-\chi_{ks}\right)
+i\left(\xi_{jr}\varepsilon_{jr}-\xi_{ks}\varepsilon_{ks}\right)},
\label{IV-EQ001}\\
K_{jr,ks}^{<,\lambda}\left(t\right)
&=K_{jr,ks}^{<}\left(t\right)
e^{-i\left(\chi_{jr}-\chi_{ks}\right)
-i\left(\xi_{jr}\varepsilon_{jr}-\xi_{ks}\varepsilon_{ks}\right)}.
\label{IV-EQ002}
\end{align}
Heat counting is obtained by replacing $\varepsilon_{jr}$ with
$\varepsilon_{jr}-\mu_{jr}$ for a reservoir channel with chemical potential
$\mu_{jr}$. The tilted memory kernel for island operator $S_{jk}$ is therefore
\begin{equation}
K_{jk}^{\lambda}\left(t\right)
=\sum_{r,s}
\left[
K_{jr,ks}^{>,\lambda}\left(t\right)
+K_{jr,ks}^{<,\lambda}\left(t\right)
\right].
\label{IV-EQ003}
\end{equation}
The cumulant-generating functional $W\left[\lambda\right]$ is defined by the
tilted non-Markovian evolution generated by Eq. (\ref{IV-EQ003}). Equivalently,
in Laplace space the tilted propagator obeys
\begin{equation}
\rho_{\lambda}\left(z\right)
=\left[
z-L_{\lambda}\left(z\right)
\right]^{-1}\rho\left(0\right),
\label{IV-EQ004}
\end{equation}
where $L_{\lambda}\left(z\right)$ denotes the Laplace transform of the tilted
memory kernel. The dominant singularity of Eq. (\ref{IV-EQ004}) gives the
long-time cumulant-generating function, which generates charge, energy, heat,
and mixed cumulants through
\begin{equation}
\left\langle\!\left\langle I^{\alpha_{1}}_{a_{1}}\cdots I^{\alpha_{n}}_{a_{n}}
\right\rangle\!\right\rangle
=\left.
\frac{\delta^{n}W\left[\lambda\right]}
{\delta\left(i\lambda^{\alpha_{1}}_{a_{1}}\right)\cdots
\delta\left(i\lambda^{\alpha_{n}}_{a_{n}}\right)}
\right|_{\lambda=0},
\label{IV-EQ005}
\end{equation}
where $a=\left(jr\right)$ denotes a measured reservoir channel,
$\alpha\in\{N,E\}$, $\lambda^{N}_{a}=\chi_{a}$, and
$\lambda^{E}_{a}=\xi_{a}$. Eqs. (\ref{IV-EQ001}-\ref{IV-EQ005}) show that the
charge and energy statistics depend on the lead and channel labels before the
sums in Eqs.
(\ref{III-EQ006}-\ref{III-EQ007}) are performed. This observation is the
input to the main thermodynamic completeness theorem.

\begin{theorem}
\label{THM-Incompleteness}
Consider two structured-reservoir realizations of the same floating Majorana
island in the cotunneling regime. If
\begin{align}
\sum_{r,s}K_{jr,ks}^{>,A}\left(t\right)
&=\sum_{r,s}K_{jr,ks}^{>,B}\left(t\right),
\label{IV-EQ006}\\
\sum_{r,s}K_{jr,ks}^{<,A}\left(t\right)
&=\sum_{r,s}K_{jr,ks}^{<,B}\left(t\right)
\label{IV-EQ007}
\end{align}
for every Majorana pair $j<k$ and all memory times $t$, then the two
realizations have identical non-Markovian island-state dynamics to
cotunneling order. If their tilted generating functionals $W^{A}[\lambda]$
and $W^{B}[\lambda]$ are analytic near $\lambda=0$ and differ there, then at
least one charge, energy, heat, or mixed cumulant tied to a specified lead and
reservoir channel differs. Thus the non-Markovian island-state dynamics is not
complete for reconstructing that transport record.
\end{theorem}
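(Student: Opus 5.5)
The proof splits into two independent parts, matching the two conclusions of Theorem~\ref{THM-Incompleteness}.

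\emph{Island dynamics.} The plan is to show that the island-state equation depends on the channel-resolved kernels only through the bilinear sums of Eqs.~(\ref{III-EQ006}-\ref{III-EQ007}). Assumptions (\ref{IV-EQ006}) and (\ref{IV-EQ007}) give $K^{>,A}_{jk}(t)=K^{>,B}_{jk}(t)$ and $K^{<,A}_{jk}(t)=K^{<,B}_{jk}(t)$ for every $j<k$ and all $t$. Adding them gives $\Gamma^{A}_{jk}(t)=\Gamma^{B}_{jk}(t)$ through Eq.~(\ref{III-EQ009}). The Born memory equation (\ref{III-EQ008}) is a linear Volterra equation whose only reservoir-dependent input is $\Gamma_{jk}$. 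I will therefore invoke uniqueness of its solution: in Laplace space the propagator $[z-L(z)]^{-1}$ is a function of $\hat\Gamma_{jk}(z)$ alone. The two realizations then share the same propagator for every initial island state, and hence every island-state observable at every time.

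The remark after Eq.~(\ref{III-EQ008}) has to be covered as well. Principal-value and cross-correlation corrections are built from the same fixed-bilinear sums, so they coincide too. This is the step where I expect care is needed. I would state explicitly that the cotunneling-order island generator is a functional of $\{K^{\gtrless}_{jk}\}$ only, by writing the second-order Nakajima--Zwanzig kernel as a sum over channel pairs of $|J_{jr,ks}|^2$ times reservoir correlators, each multiplying island superoperators built from $S_{jk}$. Because the island superoperator is independent of $(r,s)$, the sum over $(r,s)$ factors onto the reservoir part and produces exactly $K_{jk}^{\gtrless}$.

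\emph{Transport record.} This part is a contrapositive argument about analytic functions. By hypothesis $W^{A}$ and $W^{B}$ are analytic in a neighbourhood of $\lambda=0$ in the finite-dimensional variable $\lambda=(\chi_a,\xi_a)_a$, with $a=(jr)$ running over measured channels. For a continuous reservoir I would discretize the energy integral, or treat functional analyticity through its Taylor expansion. An analytic function is determined near $0$ by its Taylor coefficients at $0$. If all the coefficients in Eq.~(\ref{IV-EQ005}) agreed for every order $n$ and every multi-index $(\alpha_i,a_i)$, then $W^{A}=W^{B}$ on a connected neighbourhood of $0$, which contradicts the hypothesis. Hence some cumulant $\langle\!\langle I^{\alpha_1}_{a_1}\cdots I^{\alpha_n}_{a_n}\rangle\!\rangle$ differs. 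It is tied to specified channels $a_i$ and is a charge ($\alpha=N$), energy ($\alpha=E$), or mixed cumulant. Heat cumulants are included by the shift $\varepsilon_{jr}\to\varepsilon_{jr}-\mu_{jr}$, which is a linear reparametrization of the counting fields and so preserves analyticity.

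Combining the two parts: the realizations $A$ and $B$ are indistinguishable by any island-state data, yet they differ in a measured transport cumulant. So no map from island dynamics to that cumulant exists, which is the incompleteness claim.

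The main technical obstacle is making sure the dominant-singularity definition of $W$ from Eq.~(\ref{IV-EQ004}) is analytic and well defined. That is assumed in the statement, so the real work lies in the first part's claim that the island generator depends only on the summed kernels.

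As an optional check that the hypothesis is not vacuous, I would exhibit two channel assignments with equal sums but different individual $K_{jr,ks}$. For example, take two channels $r=1,2$ on contact $j$ and swap their weights while permuting the energies $\varepsilon_{jr}$. Then $\partial W/\partial(i\chi_{j1})$ changes even though the island kernels do not.
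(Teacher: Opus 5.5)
Your proposal is correct and follows essentially the same two-step route as the paper. First, the memory equation (\ref{III-EQ008}) sees the reservoirs only through the bilinear sums, so the hypotheses fix it. Second, a nonvanishing $W^{A}-W^{B}$ must have a nonzero Taylor coefficient at $\lambda=0$, which is a differing cumulant in Eq.~(\ref{IV-EQ005}). Your Volterra-uniqueness argument, the identity theorem, and the treatment of heat as the linear reparametrization $\chi\to\chi-\mu\xi$ make explicit what the paper leaves implicit.

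One small caution: off-diagonal cross-correlation kernels are \emph{not} fixed by the diagonal sums in (\ref{IV-EQ006})--(\ref{IV-EQ007}). Your remark that they ``coincide too'' is therefore harmless only because Eq.~(\ref{III-EQ008}) assumes diagonal channel-pair correlations, under which those terms vanish.
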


\begin{proof}
Eqs. (\ref{III-EQ008}-\ref{III-EQ009}) show that the island-state
dynamics depends on the reservoirs only through the sums over channel pairs
at fixed Majorana bilinear in
Eqs. (\ref{III-EQ006}-\ref{III-EQ007}). Conditions
(\ref{IV-EQ006}) and (\ref{IV-EQ007}) therefore make the island memory
equation identical in the two realizations. The lead-specific cumulants are
derivatives of $W\left[\lambda\right]$. If
$W^{A}\left[\lambda\right]-W^{B}\left[\lambda\right]$ is not identically zero,
its Taylor expansion around $\lambda=0$ has a nonzero coefficient at some
order. The corresponding derivative in Eq. (\ref{IV-EQ005}) is a charge,
energy, heat, or mixed cumulant that differs between the two realizations.
\end{proof}

The kernel of the projection can be constructed explicitly. For a fixed
Majorana pair $j<k$, choose channel-pair variations
$\delta K_{jr,ks}^{>,<}\left(t\right)$ satisfying
\begin{equation}
\sum_{r,s}\delta K_{jr,ks}^{>,<}\left(t\right)=0.
\label{IV-EQ008}
\end{equation}
Then the projected kernels $K_{jk}^{>,<}$ and the island dynamics are
unchanged. The tilted variation is
\begin{align}
\delta K_{jk}^{\lambda}\left(t\right)
=\sum_{r,s}&
\delta K_{jr,ks}^{>}\left(t\right)
e^{i\left(\chi_{jr}-\chi_{ks}\right)
+i\left(\xi_{jr}\varepsilon_{jr}-\xi_{ks}\varepsilon_{ks}\right)}
\nonumber\\
&+\delta K_{jr,ks}^{<}\left(t\right)
e^{-i\left(\chi_{jr}-\chi_{ks}\right)
-i\left(\xi_{jr}\varepsilon_{jr}-\xi_{ks}\varepsilon_{ks}\right)}.
\label{IV-EQ009}
\end{align}
For generic channel variations, Eq. (\ref{IV-EQ009}) is not zero. The current
record changes although the non-Markovian island-state dynamics does not.
This proves that complete island-state dynamics is thermodynamically
incomplete in the cotunneling regime.

The criterion gives a direct experimental prediction. Consider one Majorana
bilinear $S_{jk}$ and two energy-filtered channels at each contact. Let
\begin{equation}
K_{jr,ks}^{>}\left(t\right)
+K_{jr,ks}^{<}\left(t\right)
=R_{rs}f\left(t\right),
\quad
\int_{0}^{\infty}dt\,f\left(t\right)=1,
\label{IV-EQ010}
\end{equation}
where $f\left(t\right)$ is the same structured-reservoir memory profile for
the four channel pairs. Such matching can be engineered, for example, by using
energy filters with the same bandwidth and tuning tunnel amplitudes to set the
weights $R_{rs}$. The integrated cotunneling weight for a channel pair is then
\begin{equation}
R_{rs}
=\int_{0}^{\infty}dt\,
\left[
K_{jr,ks}^{>}\left(t\right)
+K_{jr,ks}^{<}\left(t\right)
\right]
\label{IV-EQ011}
\end{equation}
and the island memory kernel generated by this bilinear depends only on
\begin{equation}
R=\sum_{r,s}R_{rs}.
\label{IV-EQ012}
\end{equation}
The leading zero-frequency contribution of these weak cotunneling events to
the charge and energy noise, however, keeps the individual channel-pair
weights:
\begin{align}
S^{NN}_{ab}
&=\sum_{r,s}R_{rs}\,q^{rs}_{a}q^{rs}_{b},
\label{IV-EQ013}\\
S^{EE}_{ab}
&=\sum_{r,s}R_{rs}\,\epsilon^{rs}_{a}\epsilon^{rs}_{b},
\label{IV-EQ014}\\
S^{NE}_{ab}
&=\sum_{r,s}R_{rs}\,q^{rs}_{a}\epsilon^{rs}_{b}.
\label{IV-EQ015}
\end{align}
Here $q^{rs}_{a}=\delta_{a,jr}-\delta_{a,ks}$ is the charge increment in
measured channel $a$, and
$\epsilon^{rs}_{a}=\varepsilon_{jr}\delta_{a,jr}
-\varepsilon_{ks}\delta_{a,ks}$ is the corresponding energy increment. Now
compare two devices with the same total weight $R$. In the first device,
which we denote by $A$, the nonzero channel-pair weights are the diagonal
ones:
\begin{equation}
R^{A}_{11}=R^{A}_{22}=\frac{R}{2},
\quad
R^{A}_{12}=R^{A}_{21}=0.
\label{IV-EQ016}
\end{equation}
In the second device, denoted by $B$, the nonzero channel-pair weights are
the off-diagonal ones:
\begin{equation}
R^{B}_{12}=R^{B}_{21}=\frac{R}{2},
\quad
R^{B}_{11}=R^{B}_{22}=0.
\label{IV-EQ017}
\end{equation}
They have the same island memory kernel and therefore the same island-state
tomography in the cotunneling regime. Their heat-noise cross correlation
between the two contacts differs by
\begin{equation}
S^{EE,A}_{j,k}-S^{EE,B}_{j,k}
=-\frac{R}{2}
\left(\varepsilon_{j1}-\varepsilon_{j2}\right)
\left(\varepsilon_{k1}-\varepsilon_{k2}\right),
\label{IV-EQ018}
\end{equation}
where $S^{EE}_{j,k}$ denotes the sum of $S^{EE}_{jr,ks}$ over the two
channels at each contact. Thus energy-filtered noise can distinguish two
Majorana islands that are indistinguishable by island-state relaxation. We now
recast the same result geometrically, which makes clear which classes of
transport records are lost under projection.

\section{Geometry and topology of the projection}
\label{Geometry-section}

Theorem \ref{THM-Incompleteness} identifies the missing data algebraically. We
now give the same statement a geometric form. This form is useful because it
separates two questions that are often blended together: which variables are
needed to evolve the island state, and which variables are needed to
reconstruct thermodynamic transport observables. Let
$V_{\rm rec}$ be the vector space of kernels with explicit lead and channel
labels
$K_{jr,ks}^{>,<}\left(t\right)$, and let $V_{\rm op}$ be the vector space of
kernels $K_{jk}^{>,<}\left(t\right)$ labelled only by the Majorana bilinear.
The projection
\begin{equation}
P:V_{\rm rec}\rightarrow V_{\rm op}
\label{V-EQ001}
\end{equation}
is
\begin{equation}
\left(PK\right)_{jk}^{>,<}\left(t\right)
=\sum_{r,s}K_{jr,ks}^{>,<}\left(t\right).
\label{V-EQ002}
\end{equation}
The island-state dynamics depends on $PK$. The thermodynamic record depends
on the full $K$. Therefore, the variations of individual lead and channel
kernels that are not represented in the island-state dynamics are
\begin{equation}
\begin{aligned}
{\rm ker}\,P
=\Big\{
\delta K\in V_{\rm rec}:\,
&\sum_{r,s}\delta K_{jr,ks}^{>,<}\left(t\right)=0
\\
&{\rm for\ all}\ j<k,t
\Big\}.
\end{aligned}
\label{V-EQ003}
\end{equation}
\begin{theorem}
\label{THM-Projection}
A transport observable $O[K]$ is reconstructible from the non-Markovian
island-state dynamics if and only if it is constant on every fiber of $P$:
\begin{equation}
PK^{A}=PK^{B}
\quad \Longrightarrow \quad
O[K^{A}]=O[K^{B}].
\label{V-EQ004}
\end{equation}
If $O$ is differentiable, Eq. (\ref{V-EQ004}) implies
\begin{equation}
\delta O\left[\delta K\right]=0
\quad{\rm for\ all}\quad
\delta K\in{\rm ker}\,P.
\label{V-EQ005}
\end{equation}
\end{theorem}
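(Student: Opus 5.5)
The plan is to treat Theorem~\ref{THM-Projection} as a factorization statement, with ``reconstructible from the island-state dynamics'' read as ``$O = F\circ P$ for some map $F$ on $V_{\rm op}$''. The justification for this reading comes from Section~\ref{Memory-section}: Eqs.~(\ref{III-EQ008}-\ref{III-EQ009}) show that the island memory equation depends on the reservoir data only through $PK$. The first half of Theorem~\ref{THM-Incompleteness} adds that equal $PK$ gives identical island dynamics. Conversely, distinct values of $PK$ are what the island dynamics is taken to reveal. That is the content of ``complete knowledge of the island dynamics''. I would state this identification explicitly at the start, since the rest is elementary.

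\emph{Necessity.} Suppose $O=F\circ P$. If $PK^{A}=PK^{B}$, then $O[K^{A}]=F(PK^{A})=F(PK^{B})=O[K^{B}]$, which is Eq.~(\ref{V-EQ004}). I would also note the contrapositive, which is the physically meaningful form. If some fiber contains $K^{A},K^{B}$ with $O[K^{A}]\neq O[K^{B}]$, then two devices with identical island memory equations (Theorem~\ref{THM-Incompleteness}) give different values of $O$, so no procedure acting on the island dynamics can output $O$. The example in Eqs.~(\ref{IV-EQ016}-\ref{IV-EQ018}) is a concrete instance of this.

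\emph{Sufficiency.} Assume Eq.~(\ref{V-EQ004}). For $y$ in the image of $P$, define $F(y)=O[K]$ for any $K$ with $PK=y$. Eq.~(\ref{V-EQ004}) says exactly that this value does not depend on the choice of $K$, so $F$ is well defined and $O=F\circ P$. Hence the island dynamics, which fixes $PK$, determines $O$.

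\emph{Differential form.} The set of physical kernels is a subset $D\subset V_{\rm rec}$, constrained for example by positivity of the rates $|J|^2C$. I would assume that, near the point of interest, $D$ is open in the affine directions $K+\ker P$; this is the main technical caveat. Fix $K$ and $\delta K\in\ker P$, the space defined in Eq.~(\ref{V-EQ003}). Because $P$ is linear, $P(K+s\,\delta K)=PK$ for all small $s$, so Eq.~(\ref{V-EQ004}) makes $s\mapsto O[K+s\delta K]$ constant. Its derivative at $s=0$, namely the Gâteaux derivative $\delta O[\delta K]$, therefore vanishes, which gives Eq.~(\ref{V-EQ005}). The main obstacle is this domain issue. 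Variations $\delta K\in\ker P$ may lead out of the physical cone, for instance at boundary points such as $R^{A}_{12}=0$ in Eq.~(\ref{IV-EQ016}). To handle it, I would first work at interior points. At boundary points I would use one-sided directional derivatives along admissible directions, or extend $O$ analytically through the tilted generating functional $W[\lambda]$, which is analytic in $K$ by Eq.~(\ref{IV-EQ004}). I would also remark that Eq.~(\ref{V-EQ005}) is only necessary. The converse holds when the fibers $(K+\ker P)\cap D$ are connected, for instance convex, in which case a vanishing derivative along the fiber integrates to constancy.
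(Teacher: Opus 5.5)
Your proposal is correct and follows the paper's proof step for step. Reconstructibility is read as a factorization $O=\bar{O}\circ P$, necessity is immediate, sufficiency defines $\bar{O}$ on each fiber through an arbitrary representative, and Eq.~(\ref{V-EQ005}) comes from differentiating along $K+\epsilon\,\delta K$ with $\delta K\in\ker P$. Your additions are refinements the paper's proof leaves implicit, not a different route: you state explicitly that the island dynamics is assumed to reveal $PK$, you note that directions in $\ker P$ can leave the physical domain at boundary points such as $R^{A}_{12}=0$, and you observe that Eq.~(\ref{V-EQ005}) is only necessary unless the fibers are connected.
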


\begin{proof}
If $O$ can be reconstructed from island-state dynamics, then there exists a
functional $\bar{O}$ on projected kernels such that $O[K]=\bar{O}[PK]$. Hence
two kernels with the same $PK$ give the same $O$, which proves the necessity
of Eq. (\ref{V-EQ004}). Conversely, if Eq. (\ref{V-EQ004}) holds, then
$O[K]$ has a unique value on each fiber of $P$, so $\bar{O}[PK]$ is
well-defined by choosing any representative $K$ in that fiber. Differentiating
Eq. (\ref{V-EQ004}) along a path $K+\epsilon\delta K$ with
$\delta K\in{\rm ker}\,P$ gives Eq. (\ref{V-EQ005}).
\end{proof}

Eq. (\ref{V-EQ005}) is the completeness criterion in differential form:
the gradient of a reconstructible observable must annihilate all variations of
individual lead and channel kernels removed by the projection. This is the
geometric
mechanism behind thermodynamic incompleteness in this paper: island-state
dynamics is complete only for observables whose gradients have no component
along ${\rm ker}\,P$.

The topology of the network of tunnel contacts explains which elements of
${\rm ker}\,P$ are enforced by the arrangement of contacts and cotunneling
paths, rather than by a special choice of tunnel amplitudes. We represent the
contacts by vertices. A possible cotunneling process between contacts $j$ and
$k$ defines an edge carrying the island operator $S_{jk}$. Different
reservoir-channel pairs $\left(r,s\right)$ above the same contact pair
$\left(j,k\right)$ are distinct measured transport processes, but they all
project to the same island operator. The map $P$ therefore removes
redistributions among these channel pairs and places them inside
${\rm ker}\,P$.

This description in terms of the network of contacts has a direct transport
meaning. Island-state
dynamics can detect which Majorana bilinear has acted on the island. It
cannot detect a circulation of current among reservoir channels if that
circulation leaves the projected island kernel unchanged. If we project
measured currents further to net charge accumulation at the contacts, the
relevant linear map is the incidence map
\begin{equation}
\partial:C_{1}\rightarrow C_{0}
\label{V-EQ006}
\end{equation}
from oriented cotunneling edges $C_{1}$ to contact charges $C_{0}$. Currents
in the cycle space of this map have zero net boundary charge. They can
therefore change current noise and higher cumulants while leaving both the
contact charge accumulation and the island-state dynamics unchanged. These
cycle currents are concrete examples of variations in ${\rm ker}\,P$ and
therefore fall under Theorem \ref{THM-Projection}. The two energy-filtered
devices in Eqs. (\ref{IV-EQ016}) and (\ref{IV-EQ017}) realize the same idea in
the smallest channel space: diagonal and off-diagonal channel pairings have
the same projection onto the island memory kernel, but they occupy different
directions in the space of measured transport records and therefore give
different heat noise in Eq. (\ref{IV-EQ018}).

The fixed total fermion parity of the Majorana island gives an additional
constraint. In a four-Majorana subspace, for example,
\begin{equation}
S_{12}S_{34}
=-P_{4},
\quad
P_{4}=\gamma_{1}\gamma_{2}\gamma_{3}\gamma_{4}.
\label{V-EQ007}
\end{equation}
Fixing the fermion parity of the island fixes $P_{4}$, so complementary
cotunneling edges are not independent as island operators even though they are
distinct transport records. For general $M$, products of Majorana bilinears
satisfy analogous fixed fermion-parity relations. Thus there are two sources of
transport information that can be absent from island-state dynamics. The
topology of the network of tunnel contacts creates cycle currents and
redistributions among reservoir channels that do not appear in the projected
island memory kernel. The fixed fermion-parity constraint identifies distinct
cotunneling records that act in the same way on the island within a fixed fermion-parity
subspace. If a transport statistic distinguishes these records, its variation
along ${\rm ker}\,P$ is nonzero and the criterion in Eq. (\ref{V-EQ004})
fails. Thus the topology of the contacts and the fixed fermion-parity
constraint specify which transport information is absent from island-state
dynamics. The discussion below summarizes why this obstruction changes the
interpretation of non-Markovian island dynamics in Majorana transport.

\section{Discussion}
\label{Discussion-section}

Recent studies of Majorana transport and full counting statistics usually
start from a specified device and a specified measurement protocol. The
counting fields are attached to the chosen leads or channels from the outset
\cite{ZhangFull2021,KleinherbersSchuenemannKoenig2023,WangDistinguishing2024,SchurayRammlerRecher2020}.
By contrast, recent non-Markovian thermodynamic studies often ask how
reservoir memory changes open-system dynamics, entropy production, or heat
currents
\cite{MilzModi2021,CockrellMarkovian2022,SenyasaEntropy2022,GhoshalHeat2022,LuoLambertLiangCirio2023}.
The present result connects these two viewpoints by asking whether complete
knowledge of the non-Markovian island dynamics determines the counting
construction needed for transport thermodynamics. The answer is no. The
island dynamics determines
only the kernels $K_{jk}^{>,<}$ associated with Majorana bilinears. It does not
determine the individual lead and reservoir-channel cotunneling processes
$K_{jr,ks}^{>,<}$ that set charge-current means, heat currents, current noise,
and higher cumulants.

This conclusion rules out a common but misleading identification between
complete state dynamics and complete thermodynamic information. A memory-kernel
equation can predict every island density matrix in its regime of validity and
still omit the
information needed to reconstruct the measured transport statistics. The
missing information is not a small correction to the island dynamics. It is
the reservoir record of which channel participates in each cotunneling event.
For Majorana islands, this distinction is especially sharp because the island
sees the nonlocal bilinear $S_{jk}$, whereas the leads distinguish the many
possible reservoir-channel realizations of that same bilinear event.

The result extends the Markovian criterion of our previous work
\cite{Tian2026ThermodynamicCompleteness} from reservoir-channel assignments in
ordinary Markovian transport to memory-kernel decompositions in a
non-Markovian Majorana island. The new feature is that the missing
thermodynamic information is organized by Majorana bilinears,
structured-reservoir memory, and the topology of the network of tunnel
contacts. The geometric kernel ${\rm ker}\,P$ contains the variations of
individual lead and channel kernels lost from the island-state description,
while the topology of the network of tunnel contacts and fixed fermion-parity
constraints identify which of those variations can still be measured as
transport records. This gives a practical diagnostic for modeling: island-state
tomography or fitting a memory kernel is not enough to predict heat, charge,
and noise unless the counting-field extension is also fixed.

The experimentally relevant implication is that two structured-reservoir
Majorana devices may have the same island-state dynamics while current noise
or higher cumulants distinguish them. Such a distinction is not merely a
matter of detector resolution. It reflects the
fact that reservoir-channel information is projected out when one constructs
the island memory equation. For Majorana transport, this means that nonlocal
cotunneling signatures can reside in correlations between leads even when the
island memory kernel is unchanged. Measurements of cross noise and
higher-order counting statistics therefore test information that is absent
from island-state dynamics.

The controlled regime of this paper is the weak-cotunneling Coulomb blockade
valley. At still lower energies, the cotunneling couplings may flow toward a
topological Kondo fixed point \cite{BeriCooper2012,LutchynGlazman2017}. A
complete treatment of that regime requires a boundary counting action with
explicit lead labels rather than the perturbative kernels used here. The same
reconstruction question can then be asked for boundary current records at the
fixed point. That strong-coupling problem is a natural continuation of this
paper, but it is not required for the cotunneling theorem proved above.

\bibliography{Main}

\end{document}